\documentclass[aps,pra,twocolumn,10pt,groupedaddress,nofootinbib,superscriptaddress,notitlepage,showpacs,floatfix]{revtex4-1}
\usepackage{graphicx,graphics}
\usepackage{subfigure}
\usepackage{dcolumn}   % needed for some tables
\usepackage{array}
\usepackage{amsthm}
\usepackage{amsmath}
\usepackage{amssymb}
\usepackage{amsfonts}
\usepackage{mathrsfs} %curled math symbols
\usepackage{bm}% bold math
\usepackage{braket} %Dirac notation
\usepackage{enumitem}
\usepackage{url}
\usepackage[pdfstartview=FitH,pdfencoding=auto]{hyperref}
\usepackage{bookmark}
\usepackage{pifont}
\hypersetup{
    colorlinks=true,        		% false: boxed links; true: colored links
    linkcolor=blue,              	% color of internal links
    citecolor=blue,        			% color of links to bibliography
    filecolor=blue,          		% color of file links
    urlcolor=blue,              	% color of external links
    runcolor=cyan
			}
\usepackage[pdftex]{color}
\usepackage{multirow}
\usepackage{CJK}
%************************************************
% New commands
%************************************************
\newcommand{\tr}{\mathrm{tr}}
                % differential symbol
           % probability
\newcommand{\defeq}{:=}           			% be defined as
           			% the definition to
    % function to take the imaginary part of the parameter
           % function to take the real part of the parameter
%************************************************
% New theorems
%************************************************
\newtheorem{theorem}{Theorem}
\newtheorem{lemma}{Lemma}

\begin{document}
%===================================================================================================================================================================================
% Title, authors, affiliation, dates, PACS.
%===================================================================================================================================================================================
\begin{CJK*}{GB}{}
\title{Maximally coherent states and coherence-preserving operations}
%------------------------------------------------------------------------------------------
\author{Yi Peng}%\CJKfamily{gbsn}(ÅíÒæ)}
	\thanks{These two authors contributed equally to this work.}
\author{Yong Jiang}%\CJKfamily{gbsn}(½¯ÓÂ)}
	\thanks{These two authors contributed equally to this work.}
	\affiliation{Institute of Physics, Chinese Academy of Sciences, Beijing 100190, China}
%------------------------------------------------------------------------------------------
\author{Heng Fan}%\CJKfamily{gbsn}(·¶èì)}
	\email{hfan@iphy.ac.cn}
	\affiliation{Institute of Physics, Chinese Academy of Sciences, Beijing 100190, China}
	\affiliation{Collaborative Innovation Center of Quantum Matter, Beijing 100190, China}
%------------------------------------------------------------------------------------------
\date{\today}
\eid{identifier}
\pacs{03.65.Aa, 03.67.Mn, 03.65.Yz}
%===================================================================================================================================================================================
% Abstract
%===================================================================================================================================================================================
\begin{abstract}
	We investigate the maximally coherent states to provide a refinement in quantifying coherence and give a measure-independent definition of the coherence-preserving operations.
	A maximally coherent state can be considered as the resource to create arbitrary quantum states of the same dimension by merely incoherent operations. We propose that only the
	maximally coherent states should achieve the maximal value for a coherence measure and use this condition as an additional criterion for coherence measures to obtain a
	refinement in quantifying coherence which excludes the invalid and inefficient coherence measures. Under this new criterion, we then give a measure-independent definition of the
	coherence-preserving operations, which play a similar role in quantifying coherence as that played by the local unitary operations in the scenario of studying entanglement.
\end{abstract}
\maketitle
\end{CJK*}
%===================================================================================================================================================================================
% Main contents
%===================================================================================================================================================================================
%====================================================================================================================================================================================
% Introduction
%====================================================================================================================================================================================
\section{Introduction.}
    Coherence can be considered as one of the most distinctive features of quantum mechanics. Along with quantum entanglement, quantum discord, and etc., coherence is viewed as
	a valuable resource for quantum information processing tasks~\cite{{bennett1993teleporting},{shor1997polynomial},{bennett1984quantum},{nielsen2010quantum}}, which otherwise
	can be never achieved efficiently or impossible by classical methods. Great progress has been made for quantifying entanglement and other quantum correlations from different
	viewpoints~\cite{{vedral1997quantifying},{vedral1998entanglement},{vidal2002computable},{osterloh2002scaling},{brandao2008entanglement},
	                           {amico2008entanglement},{ollivier2001quantum},{modi2012classical}}.
	However, a rigorous framework for quantifying coherence have been proposed only recently in Ref.~\cite{baumgratz2014quantifying}. Following this seminal work, fruitful
	researches have been done, some of which are mainly devoted to study the  properties of specific coherence
	measures~\cite{{shao2015fidelity},{bai2015maximally},{cheng2015complementarity},{singh2015maximally},{du2015wigner},{mani2015cohering}} or explore new possible coherence
	measures~\cite{{du2015coherence},{yuan2015intrinsic},{streltsov2015measuring}}. There are also many considerations about the manipulation of
	coherence~\cite{{yuan2015intrinsic},{streltsov2015measuring},{yao2015quantum},{bromley2015frozen},{mani2015cohering}}, and the connections of coherence with quantum
	entanglement, quantum discord, quantum deficit, and many-body systems critical phenomena~\cite{{streltsov2015measuring},{xi2015quantum},{yao2015quantum},{chen2015coherence}}.

	In this work, we would present a thorough study of the maximally coherent states (MCSs), give a refinement of quantifying coherence by adding a new criterion for valid coherence
	measures, and define the coherence-preserving operations (CPOs). It should be  notified that these three main results are closely related. The MCSs are defined
	as the states which can be used as resources to produce any other states of the same dimension by merely the incoherent (free) operations~\cite{baumgratz2014quantifying}. A
	valid coherence measure $C$ fulfilling the four criteria in	Ref.~\cite{baumgratz2014quantifying} would assign maximal value to a set of states which we may call the maximal
	coherence-value states (MCVSs) with respect to $C$. These four criteria ensure a MCS to be a MCVS for any valid coherence measures. However, for an arbitrary valid coherence
	measure, a MCVS is not necessarily a MCS. While, one may expect CPOs in quantifying coherence to play a role analogous to that	of the local unitary operations in studying
	entanglement. There was however no measure-independent definition for CPOs like that of MCSs. Instead, for a specific coherence measure $C$ we can find a set of incoherent
	operations under which the value of the coherence measure of an arbitrary state would be conserved. We may call these operations the coherence-value preserving operations
	(CVPOs) with respect to $C$. Unfortunately, the different sets of CVPOs under different valid coherence measures are not always the same. We find out that the mismatch beween
	MCSs and MCVSs happens to many inefficient coherence measures and therefore proposed a new criterion that the MCVSs should be MCSs to exclude these inefficient coherence
	measures thus give a refinement of quantifying coherence. This new criterion also makes the different groups of CVPOs of different coherence measures converge
	to the unitary incoherent operations and makes it reasonable to define the unitary incoherent operations as the CPOs. One operational meaning of this result is that coherence of
	an arbitrary states is impossible to protect in a task without the knowledge of the state to be protected and the quantum channel it would endure.

%====================================================================================================================================================================================
%	A simple review of quantifying process of coherence
%====================================================================================================================================================================================
\section{Review of quantifying coherence.}
	In quantifying coherence~\cite{baumgratz2014quantifying}, a base $\mathcal{B}{\defeq}\left\lbrace{\ket{i}}\right\rbrace$ has been chosen and fixed firstly, which would usually
	be composed of eigenstates of some conserved quantity such as the Hamiltonian of the system of interest. The quantitative theory of coherence mainly consists of three basic
	definitions and four criteria. The three definitions are:
	\begin{enumerate}[label=(D\arabic*)]
		%##########################################################################################
		\item \label{SimpleReviewOfQuantifyingCoherence_Def-IncoherentStates}
			{\it Incoherent states.} The diagonalized states in $\mathcal{B}$ are incoherent to $\mathcal{B}$. We denote the set of the incoherent states with $\mathcal{I}$.
		%##########################################################################################
		\item \label{SimpleReviewOfQuantifyingCoherence_Def-IncoherentOperations}
			{\it Incoherent operations.} Operations mapping incoherent states into incoherent states either with or without subselections are incoherent. An incoherent operation
			$\Phi_\textrm{ICPTP}$ can be specified by a set of Kraus operators $\left\lbrace{\pmb{K}_n}\right\rbrace$ with
			$\sum_{n}\pmb{K}_n^\dag\pmb{K}_n=\pmb{I}_d$, and $\pmb{\rho}_n{\in}\mathcal{I}$. It is defined that $\pmb{\rho}_n{\defeq}\pmb{K}_n\pmb{\rho}\pmb{K}_n^\dag/p_n$ and
			$p_n{\defeq}\tr\left(\pmb{K}_n\pmb{\rho}\pmb{K}_n^\dag\right)$ for all $n$. (In this work,  we would continue to use this set of notations for the incoherent operation
			$\Phi_\textrm{ICPTP}$.)
		%##########################################################################################
		\item \label{SimpleReviewOfQuantifyingCoherence_Def-MCS}
			{\it Maximally coherent states.} A MCS is one that can be used as a resource for deterministic construction of any other states of the same dimension by incoherent
			operations only. It has proven in Ref.~\cite{baumgratz2014quantifying} that $\ket{\Psi_d}{\defeq}\frac{1}{\sqrt{d}}\sum_{i=0}^{d-1}\ket{i}$ is a MCS.
		%##########################################################################################
	\end{enumerate}
	The MCSs and the incoherent states set the upper and lower bounds for coherence measures. While, the incoherent operations puts gradient in  between. To
	obtain reasonable coherence measures, four criteria have been proposed in Ref.~\cite{baumgratz2014quantifying}:
	\begin{enumerate}[label=(C\arabic*)]
		%##########################################################################################
		\item \label{SimpleReviewOfQuantifyingCoherence_Ceriterion-IncoherentVanishing}
			$C\left(\pmb{\rho}\right)=0$ if (only if) $\pmb{\rho}$ is incoherent.
		%##########################################################################################
		\item \label{SimpleReviewOfQuantifyingCoherence_Ceriterion-Monotonicity-noSubselection}
			$C\left(\Phi_\textrm{ICPTP}\left(\pmb{\rho}\right)\right){\le}C\left(\pmb{\rho}\right)$ for arbitrary $\Phi_\textrm{ICPTP}$ and $\pmb{\rho}$.
		%##########################################################################################
		\item \label{SimpleReviewOfQuantifyingCoherence_Ceriterion-Monotonicity-subselection}
			$\sum_np_nC\left(\pmb{\rho}_n\right){\le}C\left(\pmb{\rho}\right)$  for all $\Phi_\textrm{ICPTP}$ and $\pmb{\rho}$.
		%##########################################################################################
		\item \label{SimpleReviewOfQuantifyingCoherence_Ceriterion-convexity}
			The coherence measure should not increase under the mixing processes of the states.
		%##########################################################################################
	\end{enumerate}
	Any coherence measure satisfying the four criteria is considered as valid. This gives some good coherence measures such as the relative coherence measure of coherence
	$C_\textrm{rel.ent.}$, the $\ell_1$-norm coherence measure $C_{\ell_1}$, and so on.

%====================================================================================================================================================================================
%	Coherence-preserving operations and Maximally Coherent states
%====================================================================================================================================================================================
\section{Unitary incoherent operations.}
	By definition~\ref{SimpleReviewOfQuantifyingCoherence_Def-IncoherentOperations} of the incoherent operations, a CVPO of an arbitrary coherence measure is incoherent. Of all the
	incoherent operations, the unitary incoherent operations are the simplest. It would be useful and easier to examine them first.
	\begin{lemma}\label{IcoherentOperationAndMCS_lemma_IncoherentUnitary}
		All the unitary incoherent operations would take the form of
		\begin{equation}
			\pmb{U}_\textrm{I}{\defeq}\sum_{j=0}^{d-1}e^{i\theta_j}\ket{\alpha_j}\bra{j},
			 \label{IncoherentOperationAndMCS_IncoherentUnitary_def}
		\end{equation}
		where $\left\lbrace{\alpha_j}\right\rbrace$  is a relabeling of $\left\lbrace{j}\right\rbrace$. And they are CVPOs admitted by all the valid coherence measures.
	\end{lemma}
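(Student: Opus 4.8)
The plan is to split the statement into its two assertions: first pinning down the explicit form~\eqref{IncoherentOperationAndMCS_IncoherentUnitary_def}, then verifying the coherence-value preserving property. For the structural claim I would test the incoherence requirement on the pure incoherent states $\ket{j}\bra{j}$. A unitary incoherent operation $\pmb{U}$ must send each such state to an incoherent one, so $\pmb{U}\ket{j}\bra{j}\pmb{U}^\dag$ has to be diagonal in $\mathcal{B}$. Writing $\pmb{U}\ket{j}=\sum_k c_k^{(j)}\ket{k}$ and expanding the projector produces off-diagonal contributions $c_k^{(j)}\overline{c_l^{(j)}}\ket{k}\bra{l}$ which vanish only when a single coefficient is nonzero; hence $\pmb{U}\ket{j}=e^{i\theta_j}\ket{\alpha_j}$ for some basis label $\alpha_j$ and phase $\theta_j$. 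Unitarity makes $\{\pmb{U}\ket{j}\}$ orthonormal, so $j\mapsto\alpha_j$ is injective, i.e.\ a permutation of $\{0,\dots,d-1\}$, which is exactly~\eqref{IncoherentOperationAndMCS_IncoherentUnitary_def}. Conversely, any operator of this form acts on a general incoherent state $\sum_j p_j\ket{j}\bra{j}$ by merely relabeling its diagonal entries, so it preserves $\mathcal{I}$ and is a legitimate incoherent operation per definition~\ref{SimpleReviewOfQuantifyingCoherence_Def-IncoherentOperations}.

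For the second assertion, the key observation is that $\pmb{U}_\textrm{I}$ is invertible with $\pmb{U}_\textrm{I}^{-1}=\pmb{U}_\textrm{I}^\dag$, and that this inverse is again of the form~\eqref{IncoherentOperationAndMCS_IncoherentUnitary_def}. Writing $\alpha$ for the permutation $j\mapsto\alpha_j$ and relabeling the summation index by $k=\alpha_j$ gives $\pmb{U}_\textrm{I}^\dag=\sum_k e^{-i\theta_{\alpha^{-1}(k)}}\ket{\alpha^{-1}(k)}\bra{k}$, which carries the inverse permutation and conjugated phases. Thus both $\pmb{U}_\textrm{I}$ and its inverse are single-Kraus incoherent operations without subselection. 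Applying the monotonicity criterion~\ref{SimpleReviewOfQuantifyingCoherence_Ceriterion-Monotonicity-noSubselection} to $\pmb{U}_\textrm{I}$ yields $C(\pmb{U}_\textrm{I}\pmb{\rho}\pmb{U}_\textrm{I}^\dag)\le C(\pmb{\rho})$ for every valid $C$, and applying it to the inverse acting on $\pmb{U}_\textrm{I}\pmb{\rho}\pmb{U}_\textrm{I}^\dag$ yields the reverse inequality $C(\pmb{\rho})\le C(\pmb{U}_\textrm{I}\pmb{\rho}\pmb{U}_\textrm{I}^\dag)$. The two bounds together force $C(\pmb{U}_\textrm{I}\pmb{\rho}\pmb{U}_\textrm{I}^\dag)=C(\pmb{\rho})$, so $\pmb{U}_\textrm{I}$ is a CVPO admitted by all valid coherence measures.

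I expect the only genuinely delicate point to be the structural step: one must argue that the coherences generated by a nontrivial superposition $\pmb{U}\ket{j}$ cannot collectively cancel, which is what pins each column of $\pmb{U}$ to a single basis vector up to phase. Once this is secured, unitarity promotes the assignment to a permutation essentially for free, and the preservation of coherence values then follows formally from the two-sided use of criterion~\ref{SimpleReviewOfQuantifyingCoherence_Ceriterion-Monotonicity-noSubselection} together with the fact that the inverse remains within the same class of operations.
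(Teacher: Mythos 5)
Your proposal is correct and follows essentially the same route as the paper: pin each column $\pmb{U}\ket{j}$ to a single basis vector up to phase (you verify explicitly via vanishing off-diagonal terms what the paper deduces from unitaries preserving purity), promote the labeling to a permutation by unitarity, and then obtain the CVPO property by applying the monotonicity criterion~\ref{SimpleReviewOfQuantifyingCoherence_Ceriterion-Monotonicity-noSubselection} in both directions using the fact that $\pmb{U}_\textrm{I}^\dag$ is again a unitary incoherent operation. Your version is merely more explicit at the two places the paper declares ``obvious,'' so no gap remains.
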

	\begin{proof}
	   We firstly prove the explicit expression of the unitary incoherent operations. Since the unitary operations would transform pure states into pure states, it is obvious that
	   the output state should be one of the base vector states, given the input is from $\mathcal{B}$. That means $\pmb{U}_\textrm{I}$ should only be a relabeling of the base
	   vectors up to some phases, namely be of the form presented in (\ref{IncoherentOperationAndMCS_IncoherentUnitary_def}). Here, we complete the proof of the first portion of
	   Lemma~\ref{IcoherentOperationAndMCS_lemma_IncoherentUnitary} and start to prove the rest by utilizing the newly proved part. One may soon realize that the inverse
	   $\pmb{U}_\textrm{I}^\dag$ is also unitary and incoherent.  Therefore, for any valid coherence measure $C$ and state $\pmb{\rho}$, we can obtain
	   $C\left(\pmb{\rho}\right){\ge}C\left(\pmb{U}_\textrm{I}\pmb{\rho}\pmb{U}_\textrm{I}^\dag\right)$ and
	   reversely  $C\left(\pmb{U}_\textrm{I}\pmb{\rho}\pmb{U}_\textrm{I}^\dag\right)
	               {\ge}C\left(\pmb{U}_\textrm{I}^\dag\left(\pmb{U}_\textrm{I}\pmb{\rho}\pmb{U}_\textrm{I}^\dag\right)\pmb{U}_\textrm{I}\right)
	               =C\left(\pmb{\rho}\right)$,
	   namely $C\left(\pmb{\rho}\right)$ and
	   $C\left(\pmb{U}_\textrm{I}\pmb{\rho}\pmb{U}_\textrm{I}^\dag\right)$ are of the same value. Thus $\pmb{U}_\textrm{I}$ is a CVPO to every valid coherence measure.
	   \qedhere
	\end{proof}

\section{Maximal coherence-value states.}
	Using Lemma~\ref{IcoherentOperationAndMCS_lemma_IncoherentUnitary}, we can obtain a set of MCVSs for every valid coherence measure by applying the unitary incoherent operations
	on $\ket{\Psi_d}$:
	\begin{equation}
		S_\textrm{MCS}{\defeq}
		 \left\lbrace{\frac{1}{\sqrt{d}}\sum_{j=0}^{d-1}e^{i\theta_j}\ket{j}{\,\Bigg|\,}\theta_1,\ldots,\theta_{d-1}{\,\in\,}\lbrack{0,2\pi})}\right\rbrace.
		\label{IncoherentOperationsAndMCS_S_MCS_def}
	\end{equation}
	Notice that we have used the ``MCS" as the subscript here because we will prove in Theorem~\ref{IncoherentOperationAndMCS_theorem_D3-MCS} that $S_\textrm{MCS}$ is the set of
	MCSs, too.

	It is very interesting but not surprising to find out that this set $S_\textrm{MCS}$ of states has its special position in the quantitative theory of coherence as a resource.
	\begin{theorem}
		$S_\textrm{MCS}$ is the complete collection of MCVSs recognised by all the valid coherence measures, as can be shown in Fig.~\ref{figure_MCS}.
		\label{IcoherentOperationAndMCS_S_MCS}
	\end{theorem}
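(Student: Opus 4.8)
The plan is to read the statement as the claim that $S_\textrm{MCS}$ equals the intersection, over every valid coherence measure $C$, of the set of states that attain $\max_{\pmb{\sigma}} C(\pmb{\sigma})$ (the MCVSs of $C$). I would establish the two inclusions separately: first that each element of $S_\textrm{MCS}$ maximises \emph{every} valid $C$, and second that any state which maximises \emph{all} valid $C$ must already lie in $S_\textrm{MCS}$. The first inclusion is essentially a corollary of Lemma~\ref{IcoherentOperationAndMCS_lemma_IncoherentUnitary}; the second is where the real work lies, and I would extract it from a single, explicitly computable measure.

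For the first inclusion, take any $\pmb{\rho}\in S_\textrm{MCS}$, so $\pmb{\rho}=\pmb{U}_\textrm{I}\ket{\Psi_d}\bra{\Psi_d}\pmb{U}_\textrm{I}^\dag$ for some unitary incoherent $\pmb{U}_\textrm{I}$ as in~(\ref{IncoherentOperationAndMCS_IncoherentUnitary_def}). Because $\ket{\Psi_d}$ is a MCS (Definition~\ref{SimpleReviewOfQuantifyingCoherence_Def-MCS}), it can be converted into an arbitrary $\pmb{\sigma}$ by an incoherent operation, so criterion~\ref{SimpleReviewOfQuantifyingCoherence_Ceriterion-Monotonicity-noSubselection} forces $C(\pmb{\sigma})\le C(\ket{\Psi_d})$ for every valid $C$ and every $\pmb{\sigma}$; hence $\ket{\Psi_d}$ attains the global maximum of $C$. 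Lemma~\ref{IcoherentOperationAndMCS_lemma_IncoherentUnitary} then tells us $\pmb{U}_\textrm{I}$ is a CVPO of $C$, so $C(\pmb{\rho})=C(\ket{\Psi_d})=\max_{\pmb{\sigma}}C(\pmb{\sigma})$. Thus $\pmb{\rho}$ is a MCVS of $C$, and since $C$ was arbitrary it is a common MCVS.

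For the converse I would specialise to the $\ell_1$-norm measure $C_{\ell_1}$, which is valid, so any common MCVS in particular maximises $C_{\ell_1}$; it then suffices to show that the maximisers of $C_{\ell_1}$ are exactly $S_\textrm{MCS}$. Writing $C_{\ell_1}(\pmb{\rho})=\sum_{i\neq j}|\rho_{ij}|$ and using the positivity bound $|\rho_{ij}|\le\sqrt{\rho_{ii}\rho_{jj}}$ together with Cauchy--Schwarz gives
\begin{equation}
	C_{\ell_1}(\pmb{\rho})\le\Big(\sum_i\sqrt{\rho_{ii}}\Big)^2-1\le d-1 ,
	\nonumber
\end{equation}
so the common maximum value is $d-1$. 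I would then analyse the equality conditions: the second inequality saturates only when $\rho_{ii}=1/d$ for all $i$, and the first only when $|\rho_{ij}|=\sqrt{\rho_{ii}\rho_{jj}}$ for every pair, i.e.\ $|\rho_{ij}|=1/d$. Reading $\pmb{\rho}$ as the Gram matrix of unit vectors, these equalities are exactly the saturation of Cauchy--Schwarz for every pair, which forces all the vectors to be parallel and hence $\pmb{\rho}$ to be the rank-one projector onto $\frac{1}{\sqrt d}\sum_j e^{i\theta_j}\ket{j}$. This is precisely $S_\textrm{MCS}$, completing the inclusion.

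The main obstacle is this last equality analysis: bounding $C_{\ell_1}$ is routine, but pinning down the maximisers requires ruling out any mixed state, and any unequal-amplitude pure state, from attaining $d-1$, which rests on the rank-one conclusion drawn from simultaneous saturation of the positivity and Cauchy--Schwarz bounds. A secondary point to justify carefully is the legitimacy of using the single measure $C_{\ell_1}$ for the converse: this is sound precisely because $C_{\ell_1}$ is itself valid, so membership in the intersection over all valid measures is at least as strong as being a maximiser of $C_{\ell_1}$ alone.
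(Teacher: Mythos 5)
Your proof is correct, and its logical skeleton is the same sandwich the paper uses: $S_\textrm{MCS}\subseteq S_\textrm{MCVS}\subseteq S_\textrm{MCVS}^{C_0}$ for one well-chosen valid measure $C_0$, followed by the identification of $S_\textrm{MCVS}^{C_0}$ with $S_\textrm{MCS}$. The difference lies entirely in how that last, crucial step is discharged. The paper takes $C_0=C_\textrm{rel.ent.}$ and simply cites Ref.~\cite{bai2015maximally} for the fact that $S_\textrm{MCVS}^{C_\textrm{rel.ent.}}=S_\textrm{MCS}$, so its proof is three lines long but not self-contained. You take $C_0=C_{\ell_1}$ and prove the maximizer characterization yourself: the pairwise positivity bound $|\rho_{ij}|\le\sqrt{\rho_{ii}\rho_{jj}}$, Cauchy--Schwarz on $\sum_i\sqrt{\rho_{ii}}$, and then the Gram-matrix argument that simultaneous saturation of Cauchy--Schwarz for every pair of the (nonzero, norm-$1/\sqrt{d}$) Gram vectors forces rank one, hence a pure state with uniform amplitudes, i.e.\ a member of $S_\textrm{MCS}$. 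That equality analysis is sound and is genuinely the hard content of the theorem. It is worth noting that the paper itself proves the analogous claim for $C_{\ell_1}$, but only in Appendix~\ref{AnalysMeasures} as a verification of criterion~\ref{CauseAndFiltrationOfTheIll-behavedCoherenceMeasure_C5}, and by a different route (eigendecomposition of $\pmb{\rho}$ with separate amplitude and phase inequalities); your density-matrix-level Gram argument avoids the eigendecomposition and is arguably cleaner. What each approach buys: the paper's is shorter and leans on the literature; yours is self-contained, and as a by-product it already establishes that the $\ell_1$-norm measure satisfies the new criterion~\ref{CauseAndFiltrationOfTheIll-behavedCoherenceMeasure_C5}, which the paper must prove separately later.
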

	\begin{figure}[ht]
		\includegraphics[scale=0.60]{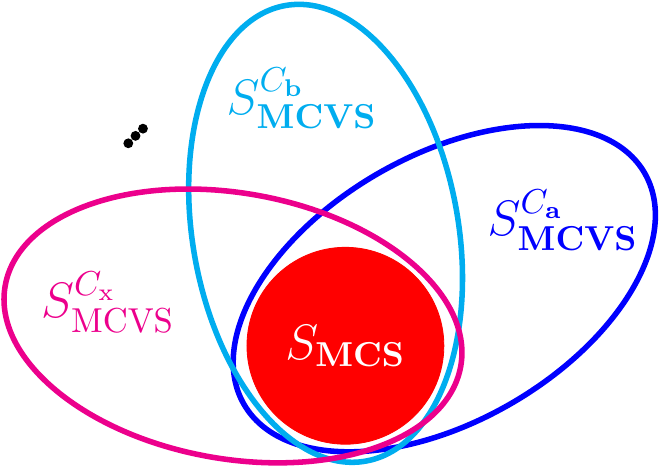}
		\caption{(Color online) Relation between the MVCSs of different valid coherence measures and $S_\textrm{MCS}$. $S_\textrm{MCVS}^C$ represent the full set of the MCVSs
			     with respect to a specific valid coherence measure $C$.}
		\label{figure_MCS}
	\end{figure}
	\begin{proof}
		We denote $S_\textrm{MCVS}$ as the complete collection of MCVSs granted by all the valid coherence measures. It is always true for any valid coherence measure $C$ that
		$S_\textrm{MCS}{\subseteq}S_\textrm{MCVS}{\subseteq}S_\textrm{MCVS}^C$. While, in Ref.~\cite{bai2015maximally} it has been shown that $S_\textrm{MCVS}^{C_\textrm{rel.ent.}}$
		coincides with $S_\textrm{MCS}$. Hence $S_\textrm{MCVS}$ should be identical to $S_\textrm{MCS}$.
		\qedhere
	\end{proof}
	However, this does not mean every valid coherence measure approves the states of $S_\textrm{MCS}$ as the sole kind of MCVSs. One can find many valid coherence measures whose
	MCVSs include more than just the states from $S_\textrm{MCS}$. A specific example is $C_\textrm{trivial}$ which is defined as a measure of the value zero iff its input state is
	incoherent otherwise always one. Another example is a continuous coherence measure $C_f$ presented in Ref.~\cite{bai2015maximally} for $d=4$. And we can follow the same way to
	construct a family of $C_f$ for arbitrary dimension $d$. Also, $S_\textrm{MCVS}^C$ could be different from one another for different $C$.

	Another fact that makes the states of $S_\textrm{MCS}$ special is that they are difficult to generate if we are constrained to using only incoherent channels. 	
	\begin{lemma}\label{IncoherentOperationAndMCS-Lemma-ICPTP-MCS}
		$\Phi_\textrm{ICPTP}\left(\pmb{\rho}\right)$ is a state of $S_\textrm{MCS}$ if and only if $\Phi_\textrm{ICPTP}$ is unitary and $\pmb{\rho}$ itself is a state in
		$S_\textrm{MCS}$.
	\end{lemma}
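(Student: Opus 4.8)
The plan is to establish the two directions separately: sufficiency is a one-line consequence of Lemma~\ref{IcoherentOperationAndMCS_lemma_IncoherentUnitary}, while necessity carries all the content. For the easy direction, assume $\Phi_\textrm{ICPTP}$ is a unitary incoherent operation, so by Lemma~\ref{IcoherentOperationAndMCS_lemma_IncoherentUnitary} it acts as $\pmb{\rho}\mapsto\pmb{U}_\textrm{I}\pmb{\rho}\pmb{U}_\textrm{I}^\dag$ with $\pmb{U}_\textrm{I}=\sum_{j}e^{i\theta_j}\ket{\alpha_j}\bra{j}$, and assume $\pmb{\rho}=\ket{\phi}\bra{\phi}\in S_\textrm{MCS}$, i.e. $|\braket{j|\phi}|=1/\sqrt{d}$ for all $j$. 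Since $\pmb{U}_\textrm{I}\ket{\phi}=\sum_j e^{i\theta_j}\braket{j|\phi}\ket{\alpha_j}$ merely relabels the basis and adjusts phases, every amplitude of the output still has modulus $1/\sqrt{d}$, and the output lies in $S_\textrm{MCS}$.

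For necessity I would write $\Phi_\textrm{ICPTP}(\pmb{\rho})=\sum_n\pmb{K}_n\pmb{\rho}\pmb{K}_n^\dag=\ket{\psi}\bra{\psi}$ with $\ket{\psi}=\frac{1}{\sqrt{d}}\sum_j e^{i\theta_j}\ket{j}\in S_\textrm{MCS}$, and exploit that an incoherent Kraus operator has at most one nonzero entry per column: demanding $\pmb{K}_n\ket{i}\bra{i}\pmb{K}_n^\dag\in\mathcal{I}$ forces $\pmb{K}_n=\sum_i c_i^{(n)}\ket{f_n(i)}\bra{i}$ for some index map $f_n$. Because the output is rank one and each $\pmb{K}_n\pmb{\rho}\pmb{K}_n^\dag$ is positive semidefinite, every nonvanishing summand must equal $\lambda_n\ket{\psi}\bra{\psi}$ with $\sum_n\lambda_n=1$; in particular $\ket{\psi}$ lies in the range of every such $\pmb{K}_n$. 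Since $\ket{\psi}$ has full support while the range of $\pmb{K}_n$ is $\mathrm{span}\{\ket{f_n(i)}:c_i^{(n)}\neq0\}$, the map $f_n$ must be onto, hence a permutation $\pi_n$ with all $c_i^{(n)}\neq0$, so these active $\pmb{K}_n$ are invertible.

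The key consequence is that $\pmb{\rho}=\lambda_n\pmb{K}_n^{-1}\ket{\psi}\bra{\psi}(\pmb{K}_n^{-1})^\dag$ is rank one, so $\pmb{\rho}=\ket{\phi}\bra{\phi}$ with $\ket{\phi}$ of full support, and matching the components of $\pmb{K}_n\ket{\phi}\propto\ket{\psi}$ gives $|c_i^{(n)}|\,|\braket{i|\phi}|=\sqrt{\lambda_n/d}$ for all $i$. I would then read off the diagonal of $\sum_n\pmb{K}_n^\dag\pmb{K}_n=\pmb{I}_d$, which yields $\sum_n|c_i^{(n)}|^2=1$; keeping only the active terms gives $\sum_{\text{active }n}|c_i^{(n)}|^2=1/(d|\braket{i|\phi}|^2)\le1$, whence $|\braket{i|\phi}|^2\ge1/d$. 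Summing this over $i$ against $\sum_i|\braket{i|\phi}|^2=1$ forces equality, so $|\braket{i|\phi}|^2=1/d$ for every $i$: thus $\pmb{\rho}\in S_\textrm{MCS}$, all inactive Kraus operators vanish, and $|c_i^{(n)}|^2=\lambda_n$ is independent of $i$. Consequently each $\pmb{K}_n=\sqrt{\lambda_n}\,\pmb{V}_n$, where $\pmb{V}_n=\sum_i e^{i\beta_i^{(n)}}\ket{\pi_n(i)}\bra{i}$ is a unitary incoherent operation of the form in Lemma~\ref{IcoherentOperationAndMCS_lemma_IncoherentUnitary}.

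The step I expect to be the real obstacle is the final one: upgrading ``every Kraus operator is proportional to a unitary incoherent operation'' to ``$\Phi_\textrm{ICPTP}$ is unitary.'' The constraints above only pin down each $\pmb{V}_n$ through $\pmb{V}_n\ket{\phi}\propto\ket{\psi}$, which leaves freedom in the permutations $\pi_n$ and in the phases off the support of $\pmb{\rho}$; a priori $\Phi_\textrm{ICPTP}$ could be a random-unitary mixture of distinct incoherent unitaries all sending $\pmb{\rho}$ to the same maximally coherent state. To close the argument one must either invoke a minimality/extremality reduction of the Kraus representation or feed in more than the single state $\pmb{\rho}$, and I would aim to phrase the conclusion so that this collapse to a single unitary is justified rather than tacitly assumed.
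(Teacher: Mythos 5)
Your proposal follows essentially the same route as the paper's proof (main text plus Appendix~\ref{Appendix_DetailPfLemma2}): sufficiency via Lemma~\ref{IcoherentOperationAndMCS_lemma_IncoherentUnitary}, and for necessity both you and the paper use purity of the output to force every branch $\pmb{K}_n\pmb{\rho}\pmb{K}_n^\dag$ onto the same ray, invoke the one-nonzero-entry-per-column structure of incoherent Kraus operators, deduce that each active $\pmb{K}_n$ is invertible (a permutation times a nonsingular diagonal), hence that $\pmb{\rho}=\ket{\phi}\bra{\phi}$ is pure with full support, and finally that $\left|\braket{i|\phi}\right|=1/\sqrt{d}$ for all $i$, i.e. $\pmb{\rho}{\in}S_\textrm{MCS}$ and $\pmb{K}_n=\sqrt{\lambda_n}\pmb{V}_n$ with each $\pmb{V}_n$ an incoherent unitary. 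Your completeness-relation argument, which simultaneously pins down the moduli and annihilates the inactive Kraus operators, is in fact tighter than the corresponding step in the appendix.

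The obstacle you flag at the end is not a weakness of your write-up; it is a genuine error in the lemma, and the paper's own proof commits exactly the step you refuse to take. The appendix asserts that $\left(\pmb{K}_n/\sqrt{p_n}\right)\ket{\varphi}=\left(\pmb{K}_{n'}/\sqrt{p_{n'}}\right)\ket{\varphi}$ makes $\pmb{K}_n/\sqrt{p_n}$ and $\pmb{K}_{n'}/\sqrt{p_{n'}}$ equal up to a global phase, which tacitly assumes the permutations $\lambda_{nj}$ coincide for all $n$ --- precisely the freedom you identified. A concrete counterexample to the ``only if'' direction: take $d=2$, $\pmb{K}_1=\pmb{I}_2/\sqrt{2}$ and $\pmb{K}_2=\pmb{X}/\sqrt{2}$ with $\pmb{X}=\ket{0}\bra{1}+\ket{1}\bra{0}$. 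Both Kraus operators are incoherent, $\pmb{K}_1^\dag\pmb{K}_1+\pmb{K}_2^\dag\pmb{K}_2=\pmb{I}_2$, and since $\pmb{X}\ket{\Psi_2}=\ket{\Psi_2}$ we get $\Phi_\textrm{ICPTP}\left(\ket{\Psi_2}\bra{\Psi_2}\right)=\ket{\Psi_2}\bra{\Psi_2}{\in}S_\textrm{MCS}$; yet $\Phi_\textrm{ICPTP}$ is not unitary, because $\Phi_\textrm{ICPTP}\left(\ket{0}\bra{0}\right)=\pmb{I}_2/2$ is mixed. So the strongest true form of the necessity direction is exactly what you proved: $\pmb{\rho}{\in}S_\textrm{MCS}$ and $\Phi_\textrm{ICPTP}$ is a random incoherent-unitary channel whose unitaries all send $\ket{\phi}$ to the same ray. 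Neither of your suggested rescues applies in the form of a Kraus-representation reduction --- the counterexample channel has minimal Kraus rank two, so no representation collapses it to a single unitary --- while feeding in more states is indeed what the downstream application (Theorem~\ref{IcoherentOperationAndMCS_S_conherence-conserve-operation-Unitary}) would require: a CVPO must preserve coherence of \emph{all} states, and the counterexample fails that, e.g. it strictly decreases $C_{\ell_1}$ of $a\ket{0}+b\ket{1}$ whenever $a\bar{b}$ is not real. The lemma as stated, being a claim about a single input state, cannot be repaired without weakening its conclusion in the way your proof does.
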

	\begin{proof}		
		Given $\Phi_\textrm{ICPTP}$ is unitary and $\pmb{\rho}$ belongs to $S_\textrm{MCS}$, it is apparent from Lemma~\ref{IcoherentOperationAndMCS_lemma_IncoherentUnitary} that
		$\Phi_\textrm{ICPTP}\left(\pmb{\rho}\right)$ is one of the states in $S_\textrm{MCS}$. Next, we presume that $\Phi_\textrm{ICPTP}\left(\pmb{\rho}\right)$ belongs to
		$S_\textrm{MCS}$. Then $\Phi_\textrm{ICPTP}\left(\pmb{\rho}\right)$ should be a pure state, since $S_\textrm{MCS}$ contains only pure states. This means
		$\pmb{\rho}_n=\pmb{\rho}_{n'}{\in}S_\textrm{MCS}$ for all the different $n$ and $n'$ if there are any. By clinging to this fact and using the spectral expression of
		$\pmb{\rho}$, we can finally see that $\Phi_\textrm{ICPTP}$ is unitary and $\pmb{\rho}$ is from $S_\textrm{MCS}$. To obtain this result, one may find it very helpful to
		utilize a specific property of the Kraus operator $\pmb{K}_n$ of $\Phi_\textrm{ICPTP}$ which has been stated in Ref~\cite{yao2015quantum} that there is at most one nonzero
		entry in every column of $\pmb{K}_n$. For detailed derivation, please refer to Appendix~\ref{Appendix_DetailPfLemma2}.
		\qedhere
	\end{proof}
	
\section{Maximally coherent states.}
	We will present an important result about the MCSs in the following. We have showed that the aforementioned states of $S_\textrm{MCS}$ are special as described in
	Theorem~\ref{IcoherentOperationAndMCS_S_MCS} and Lemma~\ref{IncoherentOperationAndMCS-Lemma-ICPTP-MCS}. The reason behind this is that
	\begin{theorem}
		$S_\textrm{MCS}$ is the complete set of MCSs.
		\label{IncoherentOperationAndMCS_theorem_D3-MCS}
	\end{theorem}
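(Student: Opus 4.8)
The plan is to prove the two inclusions separately: first that every element of $S_\textrm{MCS}$ is a maximally coherent state, and second that every maximally coherent state lies in $S_\textrm{MCS}$. The forward inclusion will rest on the fact, recalled in \ref{SimpleReviewOfQuantifyingCoherence_Def-MCS}, that $\ket{\Psi_d}$ is already known to be a MCS, combined with the invertibility of the unitary incoherent operations established in Lemma~\ref{IcoherentOperationAndMCS_lemma_IncoherentUnitary}. The reverse inclusion will be extracted from the characterisation of the maximal coherence-value states obtained in Theorem~\ref{IcoherentOperationAndMCS_S_MCS}, so that very little new work is needed once that theorem is in hand.

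For the forward direction I would write any $\ket{\psi}\in S_\textrm{MCS}$ as $\ket{\psi}=\pmb{U}_\textrm{I}\ket{\Psi_d}$ with $\pmb{U}_\textrm{I}=\sum_{j}e^{i\theta_j}\ket{j}\bra{j}$, which is a unitary incoherent operation of the form in Lemma~\ref{IcoherentOperationAndMCS_lemma_IncoherentUnitary} with the trivial relabeling $\alpha_j=j$. Given an arbitrary target $\pmb{\sigma}$, definition \ref{SimpleReviewOfQuantifyingCoherence_Def-MCS} supplies a deterministic incoherent $\Phi_\textrm{ICPTP}$ that produces $\pmb{\sigma}$ from $\ket{\Psi_d}$. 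Pre-composing this with the conjugation by $\pmb{U}_\textrm{I}^\dag$, which is again incoherent since $\pmb{U}_\textrm{I}^\dag$ is a unitary incoherent operation, gives a deterministic incoherent channel sending $\ket{\psi}$ to $\pmb{\sigma}$. As $\pmb{\sigma}$ was arbitrary, $\ket{\psi}$ is a MCS.

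For the reverse direction I would first show that any MCS $\pmb{\rho}$ attains the maximal value of every valid coherence measure. Since $\pmb{\rho}$ deterministically generates every state of the same dimension, in particular the states of $S_\textrm{MCS}$, the no-subselection monotonicity \ref{SimpleReviewOfQuantifyingCoherence_Ceriterion-Monotonicity-noSubselection} forces $C(\pmb{\rho})\ge C(\pmb{\sigma})$ for all $\pmb{\sigma}$ and every valid $C$, so $\pmb{\rho}$ is a maximal coherence-value state. Specialising to $C_\textrm{rel.ent.}$ places $\pmb{\rho}$ in $S_\textrm{MCVS}^{C_\textrm{rel.ent.}}$, which by the result of Ref.~\cite{bai2015maximally} invoked in Theorem~\ref{IcoherentOperationAndMCS_S_MCS} coincides with $S_\textrm{MCS}$; hence $\pmb{\rho}\in S_\textrm{MCS}$. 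This single step also disposes of any hypothetical mixed MCS, because only the pure states of $S_\textrm{MCS}$ maximise $C_\textrm{rel.ent.}$.

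The main obstacle is the reverse inclusion, and within it the clean passage from ``generates every state'' to ``maximises the coherence value.'' The delicate points will be to insist that the construction in \ref{SimpleReviewOfQuantifyingCoherence_Def-MCS} is \emph{deterministic}, so that criterion \ref{SimpleReviewOfQuantifyingCoherence_Ceriterion-Monotonicity-noSubselection} genuinely applies, and to check that a single well-behaved measure such as $C_\textrm{rel.ent.}$ is enough to pin the state down to $S_\textrm{MCS}$ rather than merely to some larger $S_\textrm{MCVS}^C$. Once Theorem~\ref{IcoherentOperationAndMCS_S_MCS} and Lemma~\ref{IcoherentOperationAndMCS_lemma_IncoherentUnitary} are available the remaining argument is short, so the essential difficulty has effectively been front-loaded into those earlier results.
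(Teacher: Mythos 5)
Your proof is correct, but the reverse inclusion follows a genuinely different route from the paper's. The forward direction (writing $\ket{\psi}=\pmb{U}_\textrm{I}\ket{\Psi_d}$ and composing the inverse unitary incoherent operation with the generating channel of Ref.~\cite{baumgratz2014quantifying}) is essentially identical to the paper's. For the reverse direction, however, the paper invokes Lemma~\ref{IncoherentOperationAndMCS-Lemma-ICPTP-MCS}: a MCS can be mapped by some $\Phi_\textrm{ICPTP}$ into $S_\textrm{MCS}$, and that lemma (proved by the Kraus-operator analysis of Appendix~\ref{Appendix_DetailPfLemma2}) forces the input state itself to lie in $S_\textrm{MCS}$. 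You instead bypass Lemma~\ref{IncoherentOperationAndMCS-Lemma-ICPTP-MCS} entirely: monotonicity \ref{SimpleReviewOfQuantifyingCoherence_Ceriterion-Monotonicity-noSubselection} shows any MCS is a MCVS for every valid measure, and the result of Ref.~\cite{bai2015maximally} that $S_\textrm{MCVS}^{C_\textrm{rel.ent.}}=S_\textrm{MCS}$ then pins the state down --- in effect you re-run the argument of Theorem~\ref{IcoherentOperationAndMCS_S_MCS} on an arbitrary MCS, using a single well-chosen measure. Your route is shorter and avoids the channel-level structure theory, at the cost of leaning on the external characterisation from Ref.~\cite{bai2015maximally} (which the paper anyway uses for Theorem~\ref{IcoherentOperationAndMCS_S_MCS}); the paper's route is heavier but yields strictly more, since Lemma~\ref{IncoherentOperationAndMCS-Lemma-ICPTP-MCS} is a statement about \emph{operations} (any incoherent channel preserving $S_\textrm{MCS}$ must be unitary) that the paper reuses for Theorem~\ref{IcoherentOperationAndMCS_S_conherence-conserve-operation-Unitary} and for the fragility discussion, none of which your state-level argument would supply. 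Your care about the word \emph{deterministic} (so that \ref{SimpleReviewOfQuantifyingCoherence_Ceriterion-Monotonicity-noSubselection} rather than the subselection criterion applies) is well placed and matches the paper's definition~\ref{SimpleReviewOfQuantifyingCoherence_Def-MCS}.
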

	\begin{proof}
		Firstly, we show that $\pmb{\rho}$ is a MCS, if $\pmb{\rho}$ belongs to $S_\textrm{MCS}$. Since the case of $\ket{\Psi_d}$ has proven explicitly in
		Ref.~\cite{baumgratz2014quantifying}, we would consider a state $\pmb{\rho}$ which is physically different from $\ket{\Psi_d}$ but still belongs to $S_\textrm{MCS}$. For
		such a state, we can transform it to $\ket{\Psi_d}$ by exploiting an unitary incoherent operation. Then use a set of incoherent operations to generate all the other states
		of the same dimension, as has been done in Ref.~\cite{baumgratz2014quantifying}. The combination of two incoherent operations can be counted still as one incoherent
		operation. Therefore, $\pmb{\rho}$ is indeed a MCS, if $\pmb{\rho}{\in}S_\textrm{MCS}$. Secondly, we would prove $\pmb{\rho}$ belongs to $S_\textrm{MCS}$, provided that
		$\pmb{\rho}$ is a MCS. If $\pmb{\rho}$ can be exploited to generate any other $d$-dimension states by incoherent processes, i.e. a MCS, we can find some
		$\Phi_\textrm{ICPTP}$ to transform it into a state of $S_\textrm{MCS}$. That means $\pmb{\rho}$ should be within $S_\textrm{MCS}$ according to
		Lemma~\ref{IncoherentOperationAndMCS-Lemma-ICPTP-MCS}. In conclusion, $\pmb{\rho}$ would fulfill the  definition~\ref{SimpleReviewOfQuantifyingCoherence_Def-MCS} of MCSs,
		iff $\pmb{\rho}{\,\in\,}S_\textrm{MCS}$.
		\qedhere
	 \end{proof}
\section{Coherence-value preserving operations.}
	We can also find out all the CVPOs admitted by every valid coherence measure.
	\begin{theorem}\label{IcoherentOperationAndMCS_S_conherence-conserve-operation-Unitary}
		The complete collection of CVPOs approved by every valid coherence measure, should be the full set of  unitary incoherent operations. This can also be expressed in
		Fig.~\ref{figure_CVPO}.
	\end{theorem}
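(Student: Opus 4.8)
The plan is to establish the stated equality by proving two inclusions. Write $\mathcal{S}_\textrm{CVPO}$ for the set of incoherent operations that act as CVPOs simultaneously for every valid coherence measure, and let $\mathcal{S}_\textrm{UI}$ denote the set of unitary incoherent operations of the form~(\ref{IncoherentOperationAndMCS_IncoherentUnitary_def}). The inclusion $\mathcal{S}_\textrm{UI}\subseteq\mathcal{S}_\textrm{CVPO}$ is already delivered by Lemma~\ref{IcoherentOperationAndMCS_lemma_IncoherentUnitary}, which shows that each $\pmb{U}_\textrm{I}$ is a CVPO admitted by all valid coherence measures. So the substance of the argument lies entirely in the reverse inclusion $\mathcal{S}_\textrm{CVPO}\subseteq\mathcal{S}_\textrm{UI}$: any operation preserving the value of every valid coherence measure must in fact be unitary and incoherent.

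For the reverse direction, first I would fix a single reference state $\pmb{\rho}\in S_\textrm{MCS}$, for instance $\ket{\Psi_d}\bra{\Psi_d}$. By Theorem~\ref{IcoherentOperationAndMCS_S_MCS} every state of $S_\textrm{MCS}$ is a MCVS for each valid $C$, so $C(\pmb{\rho})$ attains the maximal value of $C$ for all valid $C$ at once. Now let $\Phi_\textrm{ICPTP}$ be an arbitrary element of $\mathcal{S}_\textrm{CVPO}$. Because it conserves the value of every valid measure, $C(\Phi_\textrm{ICPTP}(\pmb{\rho}))=C(\pmb{\rho})$ equals the maximal value of $C$ for each valid $C$, and hence $\Phi_\textrm{ICPTP}(\pmb{\rho})$ is itself a MCVS for every valid coherence measure. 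It therefore lies in the intersection $\bigcap_C S_\textrm{MCVS}^C=S_\textrm{MCVS}$, which by Theorem~\ref{IcoherentOperationAndMCS_S_MCS} coincides with $S_\textrm{MCS}$. Thus $\Phi_\textrm{ICPTP}$ sends the state $\pmb{\rho}\in S_\textrm{MCS}$ back into $S_\textrm{MCS}$.

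The final step is to invoke the ``only if'' half of Lemma~\ref{IncoherentOperationAndMCS-Lemma-ICPTP-MCS}: since $\Phi_\textrm{ICPTP}(\pmb{\rho})$ belongs to $S_\textrm{MCS}$, that lemma forces $\Phi_\textrm{ICPTP}$ to be unitary (and, redundantly, reconfirms $\pmb{\rho}\in S_\textrm{MCS}$). Combined with the incoherence of every CVPO, this places $\Phi_\textrm{ICPTP}$ in $\mathcal{S}_\textrm{UI}$ and closes the reverse inclusion. Putting the two inclusions together yields $\mathcal{S}_\textrm{CVPO}=\mathcal{S}_\textrm{UI}$, as claimed.

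I expect the conceptual crux, rather than any computation, to be the passage where $\Phi_\textrm{ICPTP}(\pmb{\rho})$ is shown to land back inside $S_\textrm{MCS}$. This is precisely where Theorem~\ref{IcoherentOperationAndMCS_S_MCS} does the heavy lifting: it is only because the refined framework guarantees that the common maximizer set of all valid measures equals exactly $S_\textrm{MCS}$, and not some larger measure-dependent set, that preservation of all coherence values pins the output down to $S_\textrm{MCS}$. A single representative $\pmb{\rho}$ then suffices to trigger Lemma~\ref{IncoherentOperationAndMCS-Lemma-ICPTP-MCS}, so no further states need be examined.
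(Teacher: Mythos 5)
Your proposal is correct and follows essentially the same route as the paper's proof: Lemma~\ref{IcoherentOperationAndMCS_lemma_IncoherentUnitary} for the easy inclusion, then feeding a state of $S_\textrm{MCS}$ through the CVPO, using Theorem~\ref{IcoherentOperationAndMCS_S_MCS} to conclude the output lies in $S_\textrm{MCS}$, and finally invoking the \emph{only if} part of Lemma~\ref{IncoherentOperationAndMCS-Lemma-ICPTP-MCS} to get unitarity. The only difference is that you spell out explicitly (via the intersection $\bigcap_C S_\textrm{MCVS}^C$) the step the paper leaves implicit.
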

	\begin{figure}[ht]
		\includegraphics[scale=0.60]{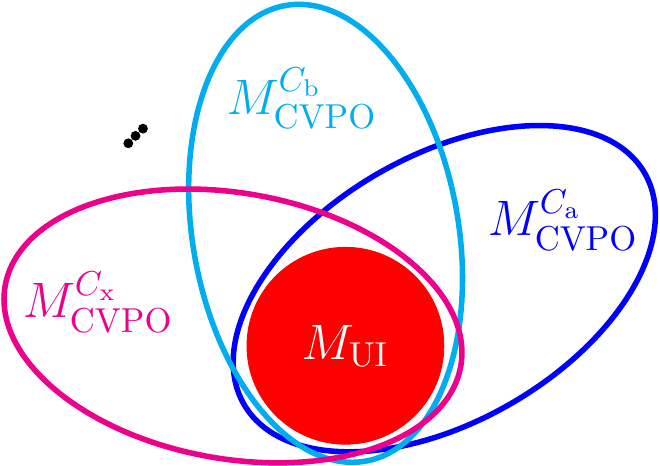}
		\caption{(Color online) Relation between the CVPOs of differen valid coherence measures and the unitary incoherent operations. Here we denote the complete collection of
				  unitary incoherent operations with $M_\textrm{UI}$, while the CVPOs for coherence measure $C$ with $M_\textrm{CVPO}^C$.}
		\label{figure_CVPO}
	\end{figure}
	\begin{proof}
		Firstly, all the unitary incoherent operations are CVPOs of an arbitrary valid coherence measure according to Lemma~{\ref{IcoherentOperationAndMCS_lemma_IncoherentUnitary}}.
		Secondly, if $\Phi_\textrm{ICPTP}$ is a CVPO admitted by every valid coherence measure and $\ket{\Psi}$ is a state of $S_\textrm{MCS}$,
		$\Phi_\textrm{ICPTP}\left(\ket{\Psi}\right)$ would be a MCVS under any measure and therefore belongs to $S_\textrm{MCS}$. By utilizing
		Lemma~\ref{IncoherentOperationAndMCS-Lemma-ICPTP-MCS}, it is clear that $\Phi_\textrm{ICPTP}$ is unitary.
		\qedhere
	\end{proof}

%====================================================================================================================================================================================
%	Cause and filtration of the ill-behaved coherence measures
%====================================================================================================================================================================================
\section{Refinement of quantifying coherence.}
	From Theorem~\ref{IcoherentOperationAndMCS_S_MCS} and Theorem~\ref{IncoherentOperationAndMCS_theorem_D3-MCS}, we can see that though
	the coherence measure satisfying the original four criteria of Ref~\cite{baumgratz2014quantifying} would count any MCS as a MCVS, many of them also give other states the maximal
	coherence values. A typical example of such inefficient but valid coherence measure is $C_\textrm{trivial}$ which as mentioned gives zero to all incoherent states but one to all
	coherent states, similar situation happens to the valid measures $C_f$~\cite{bai2015maximally} which are continuous but still inefficient. Additionally,
	Theorem~\ref{IcoherentOperationAndMCS_S_MCS} indicates that there could be differences among the sets of MCVSs of different measures. Similar disagreements exist among the sets
	of CVPOs of different coherence measures according to Theorem~\ref{IcoherentOperationAndMCS_S_conherence-conserve-operation-Unitary}. The latter would further make it difficult
	to obtain a coherence-independent definition of the CPOs. As we shall see all these problems happen to the inefficient measures such as $C_\textrm{trivial}$ and $C_f$, we
	therefore propose a new criterion for valid coherence measures to give quantifying coherence a refinement:
	\begin{enumerate}[label=(C$5$)]
		%##########################################################################################
		\item \label{CauseAndFiltrationOfTheIll-behavedCoherenceMeasure_C5}
			A valid coherence measure should only assign maximal value to the MCSs.	
		%##########################################################################################
	\end{enumerate}
	This ensures that all MCVSs are MCSs and it is the same for every coherence measure. More importantly, the inefficient coherence measures such as $C_\textrm{trivial}$ and $C_f$
	would be excluded by this newly added criterion. Some well-defined coherence measures such as the relative entropy measure, $\ell_1$-norm measure, intrinsic randomness
	measure~\cite{yuan2015intrinsic}, fulfill not only the original four criteria but also the newly added criterion. The explicit proof of
	criterion~\ref{CauseAndFiltrationOfTheIll-behavedCoherenceMeasure_C5} for these three coherence measures are provided in Appendix~\ref{AnalysMeasures}.

	Given that $C$ fulfill all the five criteria, we can use the same argument for Theorem~\ref{IcoherentOperationAndMCS_S_conherence-conserve-operation-Unitary} to show that:
	\begin{enumerate}[label=(C$5'$)]
		%##########################################################################################
		\item \label{CauseAndFiltrationOfTheIll-behavedCoherenceMeasure_C5'}
			The complete collection of CVPOs with respect to $C$ is the full set of the unitary incoherent operations.
		%##########################################################################################
	\end{enumerate}
	Therefore, the disagreements between the CVPOs of different measures would vanish, too. Moreover, \ref{CauseAndFiltrationOfTheIll-behavedCoherenceMeasure_C5'} is a necessary
	condition for all the five criteria to be fulfilled. It can be used to test if a measure can satisfy the five criteria simultaneously. A typical example is the skew information
	measure of coherence studied in Ref.~\cite{girolami2014observable}. The skew information measure would actually violate not only
	criterion~\ref{CauseAndFiltrationOfTheIll-behavedCoherenceMeasure_C5'} but also Theorem~\ref{IcoherentOperationAndMCS_S_conherence-conserve-operation-Unitary}. And this
	indicates that both \ref{SimpleReviewOfQuantifyingCoherence_Ceriterion-Monotonicity-noSubselection} one of the original four criteria and
	\ref{CauseAndFiltrationOfTheIll-behavedCoherenceMeasure_C5} the new criterion are
	violated. Our result agrees with that presented in Ref.~\cite{du2015wigner}. See detailed analysis in Appendix~\ref{AnalysMeasures}.

\section{Coherence-preserving operations.}
	An additional benefit that \ref{CauseAndFiltrationOfTheIll-behavedCoherenceMeasure_C5} provides is a natural way to define the CPOs.
	\ref{CauseAndFiltrationOfTheIll-behavedCoherenceMeasure_C5'} which is a consequence of the new criterion tells us that the set of CVPOs of any valid coherence measure $C$
	satisfying the five criteria, is independent of $C$. Furthermore, Theorem~\ref{IcoherentOperationAndMCS_S_MCS} and
	Theorem~\ref{IcoherentOperationAndMCS_S_conherence-conserve-operation-Unitary} indicate that, to all the coherence measures satisfing the original four criteria, the relation
	between the set of unitary incoherence operations and the sets of CVPOs, is structurally similar to that between $S_\textrm{MCS}$ and the different $S_\textrm{MCVS}^C$. One can
	get a clear view of this by comparing Fig.~\ref{figure_MCS} and Fig.~\ref{figure_CVPO}. For these reason we propose a definition of the CPOs.
	\begin{enumerate}[label=(D$4$)]
		%##########################################################################################
		\item \label{CPO-definition}
			An operation is coherence-preserving iff it is unitary and incoherent.
		%##########################################################################################
	\end{enumerate}
	We could see that this definition of the CPOs is measure-independent. The CPOs defined in this way are CVPOs to every coherence measure satisfying the original four criteria and
	would make a full collection of the CVPOs if the coherence measure additionally satisfies the newly added criterion.

	This result about the CPOs has one important physical implication in the general coherence preserving tasks. For an arbitrary coherence measure $C$ satisfying the five criteria,
	one may notice that the physical processes conserving the coherence values of all the $d$-dimension states could only be the processes of relabeling of the base $\mathcal{B}$.
	In other words, there is no physically non-trivial process under which the coherence value of an arbitrary $d$-dimensional state with respect to the measure $C$ can be
	conserved. However, as it is shown in Ref.~\cite{yao2015quantum}, we may find that the coherence value of some states with respect to $C$ could be frozen (conserved) under
	specific physically non-trivial processes while that of the other states could not. That means if we want the coherence value of some state to be protected, some information
	about this state and the quantum channel should be provided. Complete ignorance of the state to be protected (frozen) or the quantum channel lying ahead would make the
	protecting task impossible to achieve in principle. Moreover, by reexamining Lemma~\ref{IncoherentOperationAndMCS-Lemma-ICPTP-MCS}, we may say that MCSs are actually the most
	fragile. By that we mean the maximal coherence is the most difficult to preserve, since the only type of incoherent process preserving MCSs is relabeling.

\section{Conclusion.}
	In this work, we have provided a full collection of MCSs in (\ref{IncoherentOperationsAndMCS_S_MCS_def}), a reasonable newly added
	criterion~\ref{CauseAndFiltrationOfTheIll-behavedCoherenceMeasure_C5} for valid coherence measures
	and a measure-independent definition \ref{CPO-definition} of the CPOs. It is understandable that the states presented in (\ref{IncoherentOperationsAndMCS_S_MCS_def}) are MCSs.
	However, a valid coherence measure satisfying the original four criteria could assign maximal value to other states which are not MCSs. We therefore propose a new criterion to
	make a valid coherence measure assigns only the MCSs the maximal value and therefore exclude some inefficient coherence measures. In addition, it is apprehensible that the
	unitary incoherent operations defined in (\ref{IncoherentOperationAndMCS_IncoherentUnitary_def}) are CPOs since they are CVPOs to any coherence measure fulfilling the original
	four criteria. Similarly, other incoherent operation(s) could be CVPO(s) for some measures satisfying the original four criteria especially those with larger sets of MCVSs. With
	our newly added criterion for coherence measures, we find that only the unitary incoherent operations are the CVPOs with respect to any valid measure. We then in
	\ref{CPO-definition} identify the unitary incoherent operations as the only CPOs. Our study of the CPOs has a very significant implication that the coherence of a state is
	intrinsically hard to preserve in the case lack of information of the state and the form of the quantum channel it would undergo.

\appendix
%====================================================================================================================================================================================
% Detailed proof of Lemma~2.
%====================================================================================================================================================================================
\section{Detailed proof of Lemma~\ref{IncoherentOperationAndMCS-Lemma-ICPTP-MCS}.}\label{Appendix_DetailPfLemma2}
	We will give a detailed proof of the \emph{only if} part of  this lemma in the following.
	\begin{proof}		
		As has been claimed in the main body text that $\Phi_\textrm{ICPTP}\left(\pmb{\rho}\right){\in}S_\textrm{MCS}$ means $\pmb{\rho}_n=\pmb{\rho}_{n'}{\in}S_\textrm{MCS}$ for
		all the different $n$ and $n'$ if there are any. Notice that
		\begin{equation}
			\pmb{\rho}_n=\sum_{k}\left(q_k/p_n\right)\pmb{K}_n\ket{\varphi_k}\bra{\varphi_k}\pmb{K}_n^\dag,
			\label{IncoherentOperationsAndMCS_Lemma2_Pf_rho_n_eigen_rho_expression}
		\end{equation}
		where $q_k$ are the eigenvalues of $\pmb{\rho}$ and $\ket{\varphi_k}$ the corresponding eigenstates. One would further obtain
		\begin{equation}
			\left(\pmb{K}_n/\sqrt{p_n}\right)\ket{\varphi_k}=\left(\pmb{K}_n/\sqrt{p_n}\right)\ket{\varphi_{k'}}{\,\in\,}S_\textrm{CMS}.
			\label{Lemma-ICPTP_MCS-Kn_varphi_k_S_CMS}
		\end{equation}
		Here, we have ignored the global phase difference and will do the same in the following. This relation should be true for all $k$ and $k'$ if both $q_k$ and $q_{k'}$ are
		nonvanishing. Thus
		\begin{equation}
			\left|\braket{\varphi_k|\left(\pmb{K}_n^\dag/\sqrt{p_n}\right)|i}\right|=1/\sqrt{d},
			\label{IncoherentOperationAndMCS_D3-MCS_varphi_k|K_n|i}
		\end{equation}
		where $\ket{i}$ is an arbitrary base vector of $\mathcal{B}$. It indicates that $\pmb{K}_n^\dag\ket{i}$ should not be a null vector for any $\ket{i}$. According to
		Ref.~\cite{yao2015quantum}, if $\Phi_\textrm{ICPTP}$ is incoherent we can write $\pmb{K}_n$ as
		\begin{equation}
			\pmb{K}_n=\sum_{j=0}^{d-1}\sqrt{p_n}K_{nj}e^{i\gamma_{nj}}\ket{\lambda_{nj}}\bra{j},
		\end{equation}
		where $\left\lbrace{\lambda_{ni}}\right\rbrace=\left\lbrace{i}\right\rbrace$ and $K_{nj}$ should all be nonzero to ensure $\pmb{K}_n^\dag\ket{i}{\neq}0$. This makes
		$\pmb{K}_n$ invertible. Hence $\pmb{K}_n\ket{\varphi_k}$ would be different from $\pmb{K}_n\ket{\varphi_{k'}}$ if $\ket{\varphi_k}$ differs from $\ket{\varphi_{k'}}$.
		Applying this fact to Eq.~(\ref{IncoherentOperationsAndMCS_Lemma2_Pf_rho_n_eigen_rho_expression}), we can see that $\pmb{\rho}_n$ being a pure state implies $\pmb{\rho}$
		should also be a pure state
		\begin{equation}
			\ket{\varphi}=\sum_{j=0}^{d-1}{\varphi_j}e^{i\vartheta_j}\ket{j},
		\end{equation}
		where $\varphi_j$ are all nonnegative and satisfy the normalization condition of $\ket{\varphi}$. We can then rewrite $\pmb{\rho}_n$ as
		\begin{equation}
			 \frac{1}{\sqrt{p_n}}\pmb{K}_n\ket{\varphi}=\sum_{j=0}^{d-1}K_{nj}\varphi_je^{i\left(\gamma_{nj}+\vartheta_j\right)}\ket{\lambda_{nj}}{\in}S_\textrm{MCS}.
			 \label{Lemma-ICPTP_MCS-K_n_varphi_pure_S_MCS}
		\end{equation}
		From this expression, we can know that there is no null $\varphi_j$ and $K_{nj}=1/\left(\sqrt{d}\varphi_j\right)$. $K_{nj}$ is thus independent of $n$. Also,
		$\left(\gamma_{nj}-\gamma_{nj'}\right)$ should be independent of $n$ for every $j$ and $j'$, because $\pmb{\rho}_n=\pmb{\rho}_{n'}$ namely
		$\left(\pmb{K}_n/\sqrt{p_n}\right)\ket{\varphi}=\left(\pmb{K}_{n'}/\sqrt{p_{n'}}\right)\ket{\varphi}$. Therefore, $\left(\pmb{K}_n/\sqrt{p_n}\right)$ and
		$\left(\pmb{K}_{n'}/\sqrt{p_{n'}}\right)$ are mutually equivalent up to some global phase. One may notice that the Kraus operators $\pmb{K}_n$ have been considered are those
		with nonzero $p_n$. It is enough though. Given the facts that diagonal entries of of the sum of $\pmb{K}_n^\dag\pmb{K}_n$ with nonvanishing $p_n$ should never exceed one and
		there is a normalization constraint on $\varphi_j$, we can obtain that $\pmb{\rho}$ should belong to $S_\textrm{MCS}$. And $\Phi_\textrm{ICPTP}$ would be an unitary 
		operation given further the completeness relation of the Kraus operators.
		\qedhere
	\end{proof}

%====================================================================================================================================================================================
%	Analysis Of Explicit coherence measures
%====================================================================================================================================================================================
	\section{Analysis of specific coherence measures}\label{AnalysMeasures}
	In this section, we would firstly analyze some coherence measures satisfying the original four
	criteria~\ref{SimpleReviewOfQuantifyingCoherence_Ceriterion-IncoherentVanishing}$\sim$\ref{SimpleReviewOfQuantifyingCoherence_Ceriterion-convexity} and show that they satisfy
	also the newly added criterion~\ref{CauseAndFiltrationOfTheIll-behavedCoherenceMeasure_C5}. Among them, the relative entropy measure and $\ell_1$-norm measure have proven in
	Ref~\cite{baumgratz2014quantifying}, while the intrinsic randomness measure in Ref~\cite{yuan2015intrinsic}, to fulfill the original four criteria. And we would also discuss the
	skew information which was claimed to satisfy the original four criteria in Ref~\cite{girolami2014observable}. The skew information measure turn out to violate not only our
	newly added criterion~\ref{CauseAndFiltrationOfTheIll-behavedCoherenceMeasure_C5} but also one of the original four
	criterion~\ref{SimpleReviewOfQuantifyingCoherence_Ceriterion-Monotonicity-noSubselection} in the general case of $d{\ge}3$. 	
	%################################################################################################################################################################################
	%	Relative entropy coherence measure
	%################################################################################################################################################################################
	\subsection{Relative entropy coherence measure}
	$C_\textrm{rel.ent.}$ can certainly fulfill criterion~\ref{CauseAndFiltrationOfTheIll-behavedCoherenceMeasure_C5}, because the full set of maximal coherence-value states
	$S_\textrm{MCVS}^{C_\textrm{rel.ent.}}$ is identical to $S_\textrm{MCS}$ as has been presented in Ref.~\cite{bai2015maximally}.
	%################################################################################################################################################################################
	%	$\ell_1$-norm coherence measure
	%################################################################################################################################################################################
		\subsection{\texorpdfstring{$\ell_1$}{\texttwoinferior}-norm coherence measure}
		For the $\ell_1$ coherence measure we can show that it satisfies \ref{CauseAndFiltrationOfTheIll-behavedCoherenceMeasure_C5}, too. We can obtain the maximal value of
		$\ell_1$-norm measure of coherence $C_{\ell_1}\left(\ket{\Psi_d}\bra{\Psi_d}\right)=d-1$, given $\ket{\Psi_d}=\frac{1}{\sqrt{d}}\sum_{i=0}^{d-1}\ket{i}$ and
		\begin{equation}
			C_{\ell_1}\left(\pmb{\rho}\right)=\sum_{\stackrel{i,j=0}{i{\neq}j}}^{d-1}\left|\braket{i|\pmb{\rho}|j}\right|.
		\end{equation}
		One may consider an arbitrary state
		\begin{equation}
			\pmb{\rho}=\sum_{k}q_k\ket{\varphi_k}\bra{\varphi_k},
		\end{equation}
		where all the $q_k$ are positive and fulfill the trace normalization condition. It can be derived that
		\begin{eqnarray}
			C_{\ell_1}\left(\pmb{\rho}\right)
			&=&		\sum_{j,j'=0}^{d-1}\left|\braket{j|\pmb{\rho}|j'}\right|-1												\nonumber\\
			&=&		\sum_{j,j'=0}^{d-1}\left|\sum_{k}q_k\braket{j|\varphi_k}\braket{\varphi_k|j'}\right|-1					\nonumber\\
			&{\le}&	\sum_{j,j'=0}^{d-1}\sum_{k}q_k\left|\braket{j|\varphi_k}\right|\left|\braket{\varphi_k|j'}\right|-1	
			\label{AnalysisOfExpliciteCoherenceMeasures_ell1-inequality_phase-contraint}									\\
			&=&		d^2\sum_{k}q_k\left(\sum_{j=0}^{d-1}\frac{1}{d}\left|\braket{j|\varphi_k}\right|\right)^2-1				\nonumber\\
			&{\le}& d^2\sum_{k}q_k\sum_{j=0}^{d-1}\frac{1}{d}\left|\braket{j|\varphi_k}\right|^2-1	
			\label{AnalysisOfExpliciteCoherenceMeasures_ell1-inequality_amplitude-contraint}								\\
			&=&		d-1.
		\end{eqnarray}
		As we shall see, to make the equality in (\ref{AnalysisOfExpliciteCoherenceMeasures_ell1-inequality_amplitude-contraint}) hold true, it is required that
		$\left|\braket{j|\varphi_k}\right|$ must all be of the same value $1/\sqrt{d}$. Therefor, we can give $\ket{\varphi_k}$ the following expression
		\begin{equation}
			\ket{\varphi_k}=\frac{1}{\sqrt{d}}\sum_{j=0}^{d-1}e^{i\theta_{kj}}\ket{j}.
		\end{equation}
		To further reduce the inequality (\ref{AnalysisOfExpliciteCoherenceMeasures_ell1-inequality_phase-contraint}) into an equality, we must make sure that either there is only
		one nonzero $q_k$ or $\braket{j|\varphi_k}\braket{\varphi_k|j'}=e^{i\left(\theta_{kj}-\theta_{kj'}\right)}/d$ is independent of $k$. That means $\pmb{\rho}$ is a pure state
		and must come from $S_\textrm{MCS}$. And one may notice that the $\ell_1$-norm coherence measure of a state from $S_\textrm{MCS}$ would always be $(d-1)$. Therefore, the
		$\ell_1$-norm measure of coherence also satisfies \ref{CauseAndFiltrationOfTheIll-behavedCoherenceMeasure_C5}.
	%################################################################################################################################################################################
	%	Intrinsic randomness
	%###############################################################################################################################################################################
	\subsection{Intrinsic randomness}
		It is in Ref.~\cite{yuan2015intrinsic} that the so-called intrinsic randomness has been defined as
		\begin{equation}
			C_\textrm{int.rand.}\left(\pmb{\rho}\right)
			{\defeq}\left\lbrace
			\begin{array}{rl}
				C_\textrm{rel.ent.}\left(\pmb{\rho}\right),
				&	\textrm{if}{\,\,}\pmb{\rho}{\,\,}\textrm{is pure},		\\
				\min\limits_{q_k,\pmb{\rho}_k}\sum\limits_kq_kC_\textrm{rel.ent.}\left(\pmb{\rho}_k\right),
				&   \textrm{otherwise}.
			\end{array}
			\right.
		\end{equation}
		Now, we set to prove that it also satisfies the newly added criterion~\ref{CauseAndFiltrationOfTheIll-behavedCoherenceMeasure_C5}. Firstly when $\pmb{\rho}$ is pure, the
		intrinsic randomness measure coincides with the relative entropy measure. $C_\textrm{int.rand.}\left(\pmb{\rho}\right)$ can therefore achieve the maximal value iff
		$\pmb{\rho}$ is within $S_\textrm{MCS}$. While in the
		case that $\pmb{\rho}$ is a mixed state, $C_\textrm{int.rand.}\left(\pmb{\rho}\right)$ could be of that maximal value only if $\pmb{\rho}$ can be decomposed solely into
		statistical mixture of states from $S_\textrm{MCS}$, which is however not possible. Because a mixed state always has at least two distinct eigenvectors $\ket{\varphi_0}$ and
		$\ket{\varphi_1}$ with nonvanishing eigenvalues $q_0$ and $q_1$. For convenience of discussion we can assume $q_0{\le}q_1$ without loss of generality. One may realize that
		$\left(q_0\ket{\varphi_0}\bra{\varphi_0}+q_1\ket{\varphi_1}\bra{\varphi_1}\right)$ can be replaced by
		 $\left\lbrack{q_0\ket{\varphi_{+}}\bra{\varphi_{+}}+q_0\ket{\varphi_{-}}\bra{\varphi_{-}}+\left(q_1-q_0\right)\ket{\varphi_1}\bra{\varphi_1}}\right\rbrack$. The states
		$\ket{\varphi_\pm}$ are defined as some superpositions of $\ket{\varphi_0}$ and $\ket{\varphi_1}$. And they are designed to be mutually orthogonal. By choosing
		the superposition parameters carefully, we can keep $\ket{\varphi_{\pm}}$ out of $S_\textrm{MCS}$ even if $\ket{\varphi_0}$ and $\ket{\varphi_1}$ belong to $S_\textrm{MCS}$.
		That means a mixed state can never only have decompositions of states from $S_\textrm{MCS}$. Thus, $\pmb{\rho}$ is a MCVS with respect to the intrinsic randomness measure of
		coherence iff $\pmb{\rho}{\in}S_\textrm{MCS}$.
	%################################################################################################################################################################################
	% Skew information
	%################################################################################################################################################################################
	\subsection{Skew information}
	The skew information~\cite{{wigner1963information},{girolami2014observable}} is defined as
		\begin{equation}
			C_\textrm{skew}\left(\pmb{\rho},\pmb{K}\right)
			{\defeq}-\frac{1}{2}\tr\left(\left\lbrack{\sqrt{\pmb{\rho}},\pmb{K}}\right\rbrack^2\right)
		\end{equation}
		where $\pmb{K}{\defeq}\sum_{i=0}^{d-1}k_i\ket{i}\bra{i}$ is self-adjoint and $k_i{\neq}k_j$ for different $i$ and $j$. For a pure state $\pmb{\rho}=\ket{\psi}\bra{\psi}$,
		we can find that
		\begin{eqnarray}
			&& C_\textrm{skew}\left(\ket{\psi}\bra{\psi},\pmb{K}\right)																	\nonumber\\
			&=&\braket{\psi|\pmb{K}^2|\psi}-\left(\braket{\psi|\pmb{K}|\psi}\right)^2													\nonumber\\
			&=&\sum_{i=0}^{d-1}k_i^2\left|\braket{i|\psi}\right|^2-\left(\sum_{i=0}^{d-1}k_i\left|\braket{i|\psi}\right|^2\right)^2		\nonumber\\
			&=&\frac{1}{2}\sum_{\stackrel{i,j=0}{i{\neq}j}}^{d-1}\left|\braket{i|\psi}\right|^2\left(k_i-k_j\right)^2\left|\braket{j|\psi}\right|^2.
			\label{skewInformation_pureStates}
		\end{eqnarray}
		We can now see that $C_\textrm{skew}\left(\ket{\psi}\bra{\psi},\pmb{K}\right)$ would not be conserved under an unitary incoherent operations, i.e. a relabeling of the base
		vectors up to some phases, given $d{\ge}3$. Therefore, Theorem~\ref{IcoherentOperationAndMCS_S_conherence-conserve-operation-Unitary} and
		\ref{CauseAndFiltrationOfTheIll-behavedCoherenceMeasure_C5'} would be violated. Also, we know that
		Theorem~\ref{IcoherentOperationAndMCS_S_conherence-conserve-operation-Unitary} is a consequence of
		\ref{SimpleReviewOfQuantifyingCoherence_Ceriterion-Monotonicity-noSubselection} one of the original four criteria while
		\ref{CauseAndFiltrationOfTheIll-behavedCoherenceMeasure_C5'} is the consequence of \ref{CauseAndFiltrationOfTheIll-behavedCoherenceMeasure_C5} our newly
		added criteria. Hence, neither  \ref{SimpleReviewOfQuantifyingCoherence_Ceriterion-Monotonicity-noSubselection} nor
		\ref{CauseAndFiltrationOfTheIll-behavedCoherenceMeasure_C5} would be fulfilled.

%====================================================================================================================================================================================
%	Acknowledge
%====================================================================================================================================================================================
\begin{acknowledgments}
	This work is supported by NSFC (No.91536108), grant from Chinese Academy of Sciences (XDB01010000).	We thank useful discussions with C. P. Sun, M. Gu and V. Vedral.
\end{acknowledgments}

\bibliography{Bibliography}
\end{document}